\documentclass[amsthm]{autart}
\usepackage{graphics} 
\usepackage{epsfig} 
\usepackage{mathptmx} 
\usepackage{times} 
\usepackage{amsthm}
\usepackage{amssymb, amsmath}
\usepackage{epstopdf} 
\usepackage{cite}
\usepackage{thmtools}
\usepackage{etoolbox}
\usepackage{subeqnarray}
\usepackage{cases}
\DeclareMathOperator{\rank}{rank} %
\DeclareMathOperator{\diag}{diag} %
\DeclareMathOperator{\im}{Im} %
\DeclareMathOperator{\re}{Re} %
\newcommand{\norm}[1]{\left\lVert#1\right\rVert}

\theoremstyle{definition}
\newtheorem{real}{Realization}
\pdfminorversion=4

\setcounter{page}{1}

\begin{document}
\begin{frontmatter}

\title{Cascade and locally dissipative realizations of linear quantum systems for pure Gaussian state covariance assignment\thanksref{footnoteinfo}}
\thanks[footnoteinfo]{
This work was supported by the Australian Research Council and JSPS Grant-in-Aid No. 40513289. The material in this paper
was partially presented at the 2014 IEEE Conference on Control Applications (CCA), Oct. 
8–-10, 2014, Antibes, France. Corresponding author S.~Ma. Tel. +61 2 62688818.}
\author[Australia]{Shan Ma}\ead{shanma.adfa@gmail.com},
\author[Australia]{Matthew J. Woolley}\ead{m.woolley@adfa.edu.au},
\author[Australia]{Ian R. Petersen}\ead{i.r.petersen@gmail.com},
\author[Japan]{Naoki Yamamoto}\ead{yamamoto@appi.keio.ac.jp}
\address[Australia]{School of Engineering and Information Technology, University of New South Wales at the Australian Defence Force Academy, Canberra ACT 2600, Australia}
\address[Japan]{Department of Applied Physics and Physico-Informatics, 
Keio University, Yokohama 223-8522, Japan}

\begin{keyword}
Linear quantum system, Cascade realization, Locally dissipative realization,  Covariance assignment, Pure Gaussian state.
\end{keyword}

\begin{abstract}
This paper presents two realizations of linear quantum systems for covariance 
assignment corresponding to pure Gaussian states. 
The first one is called a cascade realization; given any covariance matrix 
corresponding to a pure Gaussian state, we can construct a cascaded quantum 
system generating that state. 
The second one is called a locally dissipative realization; given a covariance 
matrix corresponding to a pure Gaussian state, if it satisfies certain conditions, 
we can construct a linear quantum system that has only local interactions 
with its environment and achieves the assigned covariance matrix. 
Both realizations are illustrated by examples from quantum optics.  
\end{abstract}
\end{frontmatter}

\section{Introduction}
For stochastic systems, many of the performance objectives are expressed 
in terms of the variances (or covariances) of the system states. 
In a large space structure, for example, the vibration at certain points 
on the structure must be reduced to an acceptable level. 
This objective in fact involves keeping the variances of some variables 
such as deflections within prescribed bounds. 
One way to achieve this is to assign an appropriate matrix value to the 
covariance of the state vector. 
This method, referred to as {\it  covariance assignment}, has been extensively 
studied in a series of papers by Skelton and colleagues, e.g., 
in \cite{HS87:ijc,CS87:tac,SI89:ijc}. 
For linear stochastic systems with white noises, the covariance matrix 
can be computed by solving the Lyapunov equation for the system. 
In this case, the covariance assignment problem reduces to designing 
system matrices such that the corresponding Lyapunov equation has 
a prescribed solution.

Turning our attention to the quantum case, we find that a covariance matrix 
plays an essential role as well in the field of quantum information. 
In particular for a {\it linear quantum system}, the importance of a covariance matrix 
stands out, because it can fully characterize the {\it entanglement} property, 
which is indeed crucial for conducting quantum information processing 
\cite{BP03:book,weedbrook12:rmp}. 
Therefore it should be of great use to investigate the covariance assignment 
problem for linear quantum systems. 
In fact, there are several such proposals; \cite{OHY11:cdc} studies a quantum 
feedback control problem for covariance assignment, and 
\cite{KY12:pra,Y12:ptrsa,IY13:pra,MWPY14:msc} analyze systems that 
generate a {\it pure} Gaussian state. 
Note that, since a Gaussian state (with zero mean) is uniquely determined 
by its covariance matrix, the aforementioned covariance assignment problem is also known as the 
Gaussian state generation problem; 
thus, if a linear quantum system achieves a covariance matrix corresponding 
to a target Gaussian state, we call that the system generates this Gaussian 
state.

Let us especially focus on Refs. \cite{KY12:pra,Y12:ptrsa,IY13:pra,MWPY14:msc}, 
which provide the basis of this paper. 
As mentioned before, in those papers pure Gaussian states 
are examined, which are a particularly important subclass of Gaussian states 
such that the highest performance of Gaussian quantum information processing 
can be realized \cite{BP03:book,weedbrook12:rmp,MFL11:pra,MLGWRN06:prl}. 
Then they provided several methods to construct a stable linear quantum 
system generating a given pure Gaussian state. 
Moreover, conditions for generating an arbitrary pure entangled Gaussian 
state are given there; surely these are important results, because such a state 
serves as an essential resource for Gaussian quantum information processing 
tasks. 
Of course in the literature several methods for generating various pure 
entangled Gaussian states have been proposed. 
For instance, \cite{A06:prl} gives a systematic method to generate an arbitrary 
pure entangled Gaussian state; the idea is to construct a {\it coherent} process 
by applying a sequence of prescribed unitary operations (composed of beam 
splitters and squeezers in optics case) to an initial state. 
Thus this method is essentially a \emph{closed}-system approach. 
In contrast, the approach we take here is an \emph{open}-system one; 
that is, we aim to construct  {\it dissipative} processes such 
that the system is stable and uniquely driven into a desired target pure 
Gaussian state. 
This strategy is categorized into the so-called \emph{reservoir engineering} 
method~\cite{CPBZ93:prl,PCZ96:prl,WC13:prl,KM11:prl,WC14:pra}; 
in general, this approach has a clear advantage that the system has good 
robustness properties with respect to initial states and evolution time.

Now we describe the problem considered in this paper. 
The methods developed in \cite{KY12:pra,Y12:ptrsa,IY13:pra,MWPY14:msc} 
lead to infinitely many linear quantum systems that uniquely generate 
a target pure Gaussian state. 
Some of these systems are easy to implement, while others are not. 
Then a natural question is how to find a linear quantum system that is 
simple to implement, while still uniquely generates the desired pure 
Gaussian state.

In this paper, we provide two convenient realizations of a linear quantum 
system generating a target pure Gaussian state. 
The first one is a {\it cascade realization}, which is a typical system 
structure found in the literature \cite{Gardiner93:prl,N10:tac,P11:auto}. 
We show that, given any covariance matrix corresponding to a pure 
Gaussian state, we can construct a cascaded quantum system uniquely 
generating that state. 
This cascaded system is a series connection of several subsystems in which 
the output of one is fed as the input to the next.
A clear advantage of the cascade realization is that those subsystems can be 
placed at remote sites. 
Note that the cascade structure has also been widely studied in the classical 
control literature \cite{SS90:scl,HJJ05:scl,LH06:tac}.

The second one is a {\it locally dissipative realization}, which is motivated 
by the specific system structure found in, e.g. 
\cite{IY13:pra,KBDKMZ08:pra,RLMM12:pra,TV12:ptrsa}. 
Note that in these references the notion of {\it quasi-locality} has been 
studied, but in this paper we focus on a stronger notion, {\it locality}. 
Here ``locally dissipative'' means that all the system-environment interactions
act only on one system component. 
Implementations of locally dissipative systems should be considerably easier 
than that of systems which have non-local interactions \cite{BR04:book}. 
In this paper, we show that, given a covariance matrix corresponding to a 
pure Gaussian state, if it satisfies certain conditions, we can construct a locally 
dissipative quantum system generating that state. 

Lastly we remark that the state generated by our method is an {\it internal} 
one confined in the system (e.g. an intra-cavity state in optics), rather than 
an external optical field state. 
This means that, if we aim to perform some quantum information processing 
with that Gaussian state, it must be extracted to outside by for instance the 
method developed in \cite{TFSBK14:prl}. 
In particular by acting some non-Gaussian operations such as the cubic-phase 
gate or photon counting on that extracted Gaussian state, we can realize, e.g., 
entanglement distillation and universal quantum computation \cite{weedbrook12:rmp}. 
On the other hand, a generated internal Gaussian state is not necessarily 
extracted to outside for the purpose of precision measurement in the 
scenario of quantum metrology; for instance a spin squeezed state of an 
atomic ensemble can be directly used for ultra-precise magnetometry 
\cite{TA14:jpa}. 

\textit{Notation.} 
For a matrix $A=[A_{jk}]$ whose entries $A_{jk}$ are complex numbers or 
operators, we define $A^{\top}=[A_{kj}]$, $A^{\dagger}=[A_{kj}^{\ast}]$,  
 where the superscript ${}^{\ast}$ denotes either 
the complex conjugate of a complex number or the adjoint of an operator. 
$\diag[\tau_{1},\cdots, \tau_{n}]$ denotes an $n\times n$ diagonal matrix 
with $\tau_{j}$, $j=1,2,\cdots,n$, on its main diagonal. 
$\mathcal{P}_{N}$ is a $2N\times 2N$ permutation matrix defined by 
$\mathcal{P}_{N}[
x_{1} \;x_{2} \;x_{3} \;x_{4} \;\cdots \;x_{2N}]^{\top}=[
x_{1} \;x_{3} \;\cdots \;x_{2N-1} \;x_{2} \;x_{4} \;\cdots \;x_{2N}]^{\top}$ 
for any column vector $[x_{1} \;x_{2}\;x_{3} \;x_{4}  \;\cdots \;x_{2N}]^{\top}$.


\section{Preliminaries} \label{Preliminaries}
We consider a linear quantum system $G$ of $N$ modes. Each mode is characterized by 
a pair of quadrature operators $\{\hat{q}_{j}, \hat{p}_{j}\}$, $j=1,2,\cdots,N$. 
Collecting them into an operator-valued vector 
$\hat{x}\triangleq\left[\hat{q}_{1}\;\cdots\;\hat{q}_{N}\;\; \hat{p}_{1}\;\cdots\;\hat{p}_{N}\right]^{\top}$, 
we write the canonical commutation relations as
\begin{align}
\label{commutation 1}
\left[\hat{x}, \hat{x}^{\top}\right]
  \triangleq\hat{x}\hat{x}^{\top}-\left(\hat{x}\hat{x}^{\top}\right)^{\top}
  =i\Sigma, \quad \Sigma\triangleq\begin{bmatrix}
         0 & I_{N}\\
-I_{N} &0
\end{bmatrix}. 
\end{align}
Here we emphasize that the transpose operation $\top$, when applied to 
an operator-valued matrix (say, $\hat{x}\hat{x}^{\top}$), only exchanges 
the indices of the matrix and leaves the entries unchanged. 
Therefore $\left(\hat{x}\hat{x}^{\top}\right)^{\top}\ne \hat{x}\hat{x}^{\top}$. Let $\hat{H}$ be the Hamiltonian of the system,  and let 
$\{\hat{c}_{j}\}$, $j=1,2,\cdots,K$, be Lindblad operators that represent the interactions between the system and its environment. For convenience, we collect all the Lindblad operators as an operator-valued vector $\hat{L}=\begin{bmatrix}
\hat{c}_{1} &\hat{c}_{2} &\cdots & \hat{c}_{K}
\end{bmatrix}^{\top}$ and
call $\hat{L}$ the \emph{coupling vector}.
Suppose $\hat{H}$ is quadratic in $\hat{x}$, i.e., 
$\hat{H}=\frac{1}{2}\hat{x}^{\top}M\hat{x}$,
with $M=M^{\top}\in \mathbb{R}^{2N \times 2N}$, and $\hat{L}$ is linear in $\hat{x}$, i.e., 
$\hat{L} = C \hat{x}$, 
with $C\in \mathbb{C}^{K \times 2N}$, then the quantum system $G$ can be described by the following quantum stochastic differential equations (QSDEs)
\begin{equation} \label{QSDEequation}
\left\{\begin{aligned}
d\hat{x}(t)&=\mathcal{A}\hat{x}(t)dt+\mathcal{B} \begin{bmatrix}
d\hat{A}^{\top}(t) &d\hat{A}^{\dagger}(t)
\end{bmatrix}^{\top},\\
d\hat{Y}(t)&=\mathcal{C}\hat{x}(t)dt+d\hat{A}(t),
 \end{aligned}\right.
\end{equation}
where $\mathcal{A}=\Sigma(M+\im(C^{\dagger}C))$, 
$\mathcal{B}=i\Sigma[- C^{\dagger}\; \; C^{\top}]$, $\mathcal{C}=C$~
\cite{Y12:ptrsa},  \cite[Chapter~6]{WM10:book}. The input $d\hat{A}(t) =\left[d\hat{A}_{1}(t)\;\; \cdots \;\; d\hat{A}_{K}(t)\right]^{\top}$ represents $K$ independent quantum stochastic processes,  
with $d\hat{A}_{j}(t)$, $j=1,2,\cdots,K$, satisfying the following quantum It\=o rules: 
\begin{equation} \label{ito}
\left\{\begin{aligned} 
d\hat{A}_{j}(t)d\hat{A}_{k}^{\ast}(t)&=\delta_{jk}dt,\\
 d\hat{A}_{j}(t)d\hat{A}_{k}(t)&
   =d\hat{A}_{j}^{\ast}(t)d\hat{A}_{k}^{\ast}(t)=d\hat{A}_{j}^{\ast}(t)d\hat{A}_{k}(t)=0,
 \end{aligned}\right. 
\end{equation}
where $\delta_{jk}$ is the Kronecker $\delta$-function. 
The output $d\hat{Y}(t)=\left[d\hat{Y}_{1}(t)\;\; \cdots \;\; d\hat{Y}_{K}(t)\right]^{\top}$ 
satisfies quantum It\=o rules similar to~\eqref{ito}~\cite{HP84:cmp,B92:JMA,GZ00:book, BHJ07:siamjco,WM10:book,Y12:ptrsa}. 
The quantum expectation of the  
vector $\hat{x} $ is denoted by  
$\langle \hat{x} \rangle $ 
and the covariance matrix  is given by   $V=\frac{1}{2}\langle \triangle\hat{x}{\triangle\hat{x}}^{\top}+(\triangle\hat{x}{\triangle\hat{x}}^{\top})^{\top} \rangle$, where $\triangle\hat{x}=\hat{x}-\langle \hat{x}\rangle$; see, e.g.,~\cite{KY12:pra,Y12:ptrsa,MFL11:pra}.  The time evolutions of the mean vector $\langle \hat{x}(t) \rangle$ 
and the covariance matrix    $V(t)$ can be derived from~\eqref{QSDEequation} 
by using the quantum It\=o rule. They are given by
\begin{numcases}{}
\frac{d\langle\hat{x}(t)\rangle}{dt}=\mathcal{A}\langle\hat{x}(t)\rangle,\label{meanfunction2} \\
\frac{dV(t)}{dt}=\mathcal{A}V(t)+V(t)\mathcal{A}^{\top}+\frac{1}{2}\mathcal{B}\mathcal{B}^{\dagger}. \label{covfunction2}
\end{numcases}
As in the classical case, a Gaussian state is completely characterized by the  mean vector $\langle \hat{x}\rangle$ 
and the covariance matrix   $V$. Since the mean vector $\langle \hat{x}\rangle$ 
contains no information about noise and entanglement, 
we will restrict our attention to zero-mean Gaussian states (i.e., $\langle \hat{x}\rangle=0$). 
  A Gaussian state is pure if and only if its  covariance matrix   $V$ satisfies $\det(V)=2^{-2N}$. 
In fact, when a Gaussian state is pure, its covariance matrix   $V$ can always be factored as
\begin{align}\label{covariance}
V=\frac{1}{2}SS^{\top},\quad S=\begin{bmatrix}
Y^{-\frac{1}{2}} &0\\
XY^{-\frac{1}{2}} &Y^{\frac{1}{2}} 
\end{bmatrix},
\end{align}
where $X=X^{\top}\in \mathbb{R}^{N \times N}$, $Y=Y^{\top}\in \mathbb{R}^{N \times N}$ and $Y>0$~\cite{MFL11:pra,WGKWC04:pra}. 
For example, the $N$-mode vacuum state is a special pure Gaussian state with $X=0$ and $Y=I_{N}$. It can be seen from~\eqref{covariance}  that a pure Gaussian state is uniquely specified by a complex, symmetric matrix $Z\triangleq X+iY$, which is referred to as the \emph{graph} corresponding to a pure Gaussian state~\cite{MFL11:pra}. 
Note also that the matrix $S$ satisfies $S\Sigma S^{\top}=\Sigma$, 
which means that $S$ is a symplectic matrix. The symplectic nature of $S$ guarantees that 
the mapping $\hat{x}\longmapsto \hat{x}'\triangleq S\hat{x}$ 
preserves the canonical commutation relations~\eqref{commutation 1}, that is
\begin{align*}
\left[\hat{x}', \hat{x}'^{\top}\right]=\left[S\hat{x}, \left(S\hat{x}\right)^{\top}\right]=S\left[\hat{x}, \hat{x}^{\top}\right]S^{\top}=S\left(i\Sigma\right) S^{\top}=i\Sigma.
\end{align*}
Note that if the  system $G$ is initially in a Gaussian state, 
then the system $G$ will always be Gaussian, with the mean vector $\langle \hat{x}(t) \rangle$ 
and the covariance matrix   $V(t)$ obeying~\eqref{meanfunction2} and~\eqref{covfunction2}, respectively. We shall be particularly interested in the steady-state covariance matrix $V(\infty)$.

Assume that the system $G$ is initially in a Gaussian state. 
The problem of pure Gaussian state covariance assignment is to find a Hamiltonian  $\hat{H}$ 
and a coupling vector $\hat{L}$ such that the corresponding linear quantum system described by~\eqref{QSDEequation} is asymptotically stable and achieves the covariance matrix  corresponding to a given pure Gaussian state. Since a pure Gaussian state (with zero mean) is uniquely specified by its covariance matrix, so if a linear quantum system achieves a  covariance matrix corresponding to a pure Gaussian state,  we can simply say that such a linear quantum system uniquely generates the  pure Gaussian state. The problem can be expressed mathematically as: 
\begin{align*}
&\text{find}\quad\quad\quad\quad  M=M^{\top}\in \mathbb{R}^{2N \times 2N}\;\; \text{and}\;\; C\in \mathbb{C}^{K \times 2N}\\
&\text{subject to}\quad\;\;\; \mathcal{A}\;\; \text{is Hurwitz},\\
&\quad\quad\quad\quad\quad \;\;                \mathcal{A}V+V\mathcal{A}^{\top}+\frac{1}{2}\mathcal{B}\mathcal{B}^{\dagger}=0,
\end{align*}
where $V$ is the covariance matrix  corresponding to the desired target pure Gaussian state. Here a matrix  $\mathcal{A}$ is said to be \emph{Hurwitz} if all its eigenvalues have strictly negative real parts. A system described by~\eqref{QSDEequation} is said to be \emph{asymptotically stable} if the  matrix
$\mathcal{A}$ is a Hurwitz matrix.
 Recently, a necessary and sufficient condition has been developed in~\cite{KY12:pra} for solving the pure Gaussian state covariance assignment problem. 
 The result is summarized as follows. 
\begin{lem}[\cite{KY12:pra,Y12:ptrsa}]\label{lem1}
Let $V$ be the covariance matrix  corresponding to  a given $N$-mode pure Gaussian state. 
Assume that $V$ is expressed in the factored form~\eqref{covariance}.   
Then this pure Gaussian state is uniquely generated by the linear quantum system~\eqref{QSDEequation} if and only if 
\begin{align} \label{G}
M=\begin{bmatrix}
XRX+YRY-\Gamma Y^{-1}X-XY^{-1}\Gamma^{\top} &-XR+\Gamma Y^{-1}\\
-RX+Y^{-1}\Gamma^{\top} &R
\end{bmatrix},
\end{align}
 and 
\begin{align} \label{C}
C=P^{\top}[-Z \;\; I_{N}], 
\end{align} 
where $R=R^{\top}\in\mathbb{R}^{ N\times N}$, $\Gamma=-\Gamma^{\top}\in\mathbb{R}^{ N\times N}$, 
and $P\in \mathbb{C}^{N\times K}$ are free matrices satisfying the following rank condition
\begin{align}\label{rankconstraint}
        \rank\left(\begin{bmatrix}P &QP &\cdots &Q^{N-1}P\end{bmatrix}\right)=N,
           \;\;  Q\triangleq -iRY+Y^{-1}\Gamma. 
\end{align}
\end{lem}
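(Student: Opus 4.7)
The plan is to split the argument into sufficiency and necessity and, in both directions, exploit the symplectic change of coordinates $\hat{x}\mapsto\hat{x}' = S^{-1}\hat{x}$ with $S$ as in~\eqref{covariance}, which maps the target pure Gaussian state to the $N$-mode vacuum. Because $S$ is symplectic the CCR are preserved, so the original system generates $V$ iff the transformed system $(S^\top M S,\, CS)$ generates the vacuum. A useful preliminary identity, obtained from $\Sigma^\top = -\Sigma$ and $C^\dagger C + C^\top C^* = 2\,\mathrm{Re}(C^\dagger C)$, is
\[
\tfrac{1}{2}\mathcal{B}\mathcal{B}^\dagger \;=\; -\Sigma\,\mathrm{Re}(C^\dagger C)\,\Sigma, \qquad \mathcal{A} \;=\; \Sigma\bigl(M + \mathrm{Im}(C^\dagger C)\bigr),
\]
so only $\mathrm{Re}(C^\dagger C)$ appears in the noise term while $\mathrm{Im}(C^\dagger C)$ is absorbed into the drift.

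For sufficiency, I would substitute the candidates~\eqref{G}--\eqref{C} into the Lyapunov equation $\mathcal{A}V + V\mathcal{A}^\top + \tfrac{1}{2}\mathcal{B}\mathcal{B}^\dagger = 0$ and verify it block by block, using $V = \tfrac{1}{2}SS^\top$. The specific forms of $M$ and $C = P^\top[-Z, I_N]$ are engineered precisely so that, after factoring out $S$ and $S^\top$, the three independent blocks of the resulting equation collapse; the terms involving $R$ cancel by $R = R^\top$ and those involving $\Gamma$ cancel by $\Gamma = -\Gamma^\top$. For Hurwitzness under~\eqref{rankconstraint}, I pass to the vacuum frame and then to complex mode coordinates $\hat{a} = (\hat{q}' + i\hat{p}')/\sqrt{2}$, in which the mean-value dynamics reduce to
\[
\tfrac{d}{dt}\langle\hat{a}\rangle \;=\; \bigl(iH - \tfrac{1}{2}D^\dagger D\bigr)\langle\hat{a}\rangle,
\]
with Hermitian $H$ encoding $R,\Gamma$ and $D$ a linear image of $P$. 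Applying the PBH detectability test to the pair $(Q, P)$, where $Q = -iRY + Y^{-1}\Gamma$ is the complex generator that appears after transforming back, converts Hurwitzness into the rank condition~\eqref{rankconstraint}.

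For necessity, I would invoke the graph characterization of a pure Gaussian state: the Hermitian matrix $V - \tfrac{i}{2}\Sigma$ has rank $N$, with left null space spanned by the rows of $[-Z, I_N]$ — a short direct check using $V = \tfrac{1}{2}SS^\top$ and $Z = X + iY$. Combining the Lyapunov equation with this rank drop then forces $C_{j\cdot}\bigl(V - \tfrac{i}{2}\Sigma\bigr)C_{j\cdot}^\dagger = 0$ for every row, hence $C = P^\top[-Z, I_N]$ for some $P$. Plugging this $C$ back into the Lyapunov equation pins down $M$ modulo the two-parameter freedom of a symmetric $R$ and skew-symmetric $\Gamma$, reproducing~\eqref{G}. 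Finally, uniqueness of the generated steady state is equivalent to Hurwitzness of $\mathcal{A}$, which by the argument of the previous paragraph is equivalent to~\eqref{rankconstraint}.

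The step I expect to be the main obstacle is the translation between the real Hurwitz condition on $\mathcal{A}\in\mathbb{R}^{2N\times 2N}$ and the complex rank condition~\eqref{rankconstraint}: this requires identifying the correct complex-mode decomposition of the closed-loop dynamics and faithfully tracking the symmetry constraints $M = M^\top$, $Z = Z^\top$, $\Gamma = -\Gamma^\top$ through the symplectic-plus-unitary change of basis. The algebraic verification of the Lyapunov equation in the sufficiency direction is tedious but essentially mechanical by comparison.
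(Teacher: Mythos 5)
Your outline is essentially sound, but note first that the paper does not actually prove Lemma~\ref{lem1}: the result is imported from \cite{KY12:pra,Y12:ptrsa}, and the only in-paper justification is the sketch in Remark~\ref{rmk2}. You and that remark share the organizing device --- the symplectic change of frame sending $V=\frac{1}{2}SS^{\top}$ to the vacuum --- but diverge afterwards. The remark leans on two cited facts, namely that the $N$-mode vacuum is generated precisely by asymptotically stable \emph{passive} systems and that \eqref{rankconstraint} is the known stability criterion for such systems, so that \eqref{G} and \eqref{C} fall out of $M=S^{-\top}\tilde{M}S^{-1}$, $C=\tilde{C}S^{-1}$. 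You instead rebuild the result from first principles: a block-by-block Lyapunov verification for sufficiency (the paper performs exactly this computation in the special case $M=0$, $C=iY^{-1/2}[-Z\;\,I_{N}]$ inside the proof of Theorem~\ref{theorem1}), and for necessity the observation that $V-\frac{i}{2}\Sigma$ is Hermitian positive semidefinite of rank $N$ with left kernel spanned by the rows of $[-Z\;\,I_{N}]$, forcing every row of $C$ to be a nullifier; this is closer in spirit to the original Koga--Yamamoto derivation than to the paper's passive-system reading. Your identity $\frac{1}{2}\mathcal{B}\mathcal{B}^{\dagger}=-\Sigma\re(C^{\dagger}C)\Sigma$ is correct, and you rightly flag the delicate step, the passage from Hurwitzness of the real drift to the complex Kalman-type condition \eqref{rankconstraint}; the reason it works is that in the transformed frame the Hamiltonian contributes a generator with purely imaginary spectrum, so detectability upgrades to the full Kalman rank test. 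Two refinements are worth recording: the quantity the Lyapunov equation actually controls is $\mathcal{A}W+W\mathcal{A}^{\dagger}=-(\Sigma C^{\top})(\Sigma C^{\top})^{\dagger}$ with $W=V-\frac{i}{2}\Sigma\ge 0$, so it is the vanishing of $(\Sigma C^{\top})^{\dagger}$ on $\ker W$, rather than your quadratic form $C_{j\cdot}WC_{j\cdot}^{\dagger}=0$ directly, that yields $C=P^{\top}[-Z\;\,I_{N}]$; and the claim that the Lyapunov equation then pins down $M$ exactly up to the $N^{2}$-dimensional family \eqref{G} deserves an explicit dimension count.
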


\begin{rem}
From~\eqref{C}, we see that the resulting coupling vector $\hat{L}$ of the engineered system is $
\hat{L}=C\hat{x}=P^{\top}[-Z \;  I_{N}]\hat{x}=P^{\top} (\left[\hat{p}_{1}\;\cdots\;\hat{p}_{N}\right]^{\top}-Z\left[\hat{q}_{1}\;\cdots\;\hat{q}_{N}\right]^{\top} )$.
Therefore, all the components of $\hat{L}$ are \emph{nullifiers} for the desired target pure Gaussian state~\cite{MFL11:pra}. As a special example, one can  engineer a purely dissipative system (with $\hat{H}=0$) to generate a pure Gaussian state. In this case, one could take $R=\Gamma=0_{N\times N}$ and $P=I_{N}$ in Lemma~\ref{lem1}. Then the resulting coupling vector $\hat{L}$ is the so-called \emph{nullifier vector} for the desired target pure Gaussian state. 
\end{rem}

\begin{rem}\label{rmk2}
 Lemma~\ref{lem1}  has a simple interpretation in terms of symplectic transformations~\cite{TN15:scl,Y12:ptrsa}. 
 As mentioned before, vacuum states are a special class of pure Gaussian states. 
 The covariance matrix  corresponding to the $N$-mode vacuum state is $V=\frac{1}{2}I_{N}$. 
By using physical realizability conditions, it can be proved that the $N$-mode vacuum state can only be generated by an $N$-mode passive linear quantum system~\cite{TN15:scl}. 
The converse is also true. That is, an $N$-mode passive linear quantum system, 
if it is asymptotically stable, must evolve toward the $N$-mode vacuum state~\cite{H14:tac}.   
 Recall that for a passive linear quantum system,  
the Hamiltonian is always  of the form $\hat{H}=\frac{1}{2}\hat{x}^{\top}\tilde{M}\hat{x}$, with $\tilde{M}=\begin{bmatrix}
\tilde{R} &\tilde{\Gamma}\\
\tilde{\Gamma}^{\top} &\tilde{R}
\end{bmatrix}$, $\tilde{R}=\tilde{R}^{\top}\in\mathbb{R}^{ N\times N}$, 
and  $\tilde{\Gamma}=-\tilde{\Gamma}^{\top}\in\mathbb{R}^{ N\times N}$, 
and the coupling vector is always of the form $\hat{L} = \tilde{C}\hat{x}$, 
with $\tilde{C}=\tilde{P}^{\top}\left[-iI_{N}\;\; I_{N}\right]$, $\tilde{P}\in \mathbb{C}^{N\times K}$~\cite{P11:auto,P12:scl,GY16:tac,GZ15:auto}. Now we apply a symplectic transformation to $\hat{x}$, 
that is, we define $\hat{x}'\triangleq S\hat{x}$. Then, in terms of $\hat{x}'$, 
the Hamiltonian is rewritten as $\hat{H}=\frac{1}{2}\hat{x}'^{\top}S^{-\top}\tilde{M}S^{-1}\hat{x}'$ 
and  the coupling vector is rewritten as $\hat{L} = \tilde{C} S^{-1}\hat{x}'$. 
We also observe that the relation between the covariance matrix $V'$ of $\hat{x}'$ 
and the covariance matrix $V$ of $\hat{x}$ is given as follows:
\begin{align*}
V'&=\frac{1}{2}\langle \triangle\hat{x}'{\triangle\hat{x}}'^{\top}+(\triangle\hat{x}'{\triangle\hat{x}}'^{\top})^{\top} \rangle\\
&=\frac{1}{2}S\langle \triangle\hat{x}{\triangle\hat{x}}^{\top}+(\triangle\hat{x}{\triangle\hat{x}}^{\top})^{\top} \rangle S^{\top}\\
&=SV S^{\top}.
\end{align*}
If the passive linear quantum system is asymptotically stable, then based on the result in~\cite{H14:tac}, we have  
$V\to \frac{1}{2}I_{N},\; \text{as}\; \; t \to +\infty$.
As a result, $V'\to \frac{1}{2} SS^{\top}$, which gives the desired pure Gaussian state. 
Combining the results above, we conclude that for a given pure Gaussian state $V=\frac{1}{2} SS^{\top}$, 
a complete parametrization of the linear quantum system $G$ that uniquely generates this pure Gaussian state is given by
\begin{numcases}{}
M=S^{-\top}\tilde{M}S^{-1},\label{Mform}\\
C=\tilde{C} S^{-1}, \label{Cform}
\end{numcases}
where $(\tilde{M}, \tilde{C})$ form an asymptotically stable passive linear quantum system. 
Substituting $\tilde{M}=\begin{bmatrix}
\tilde{R} &\tilde{\Gamma}\\
\tilde{\Gamma}^{\top} &\tilde{R}
\end{bmatrix}$ and $\tilde{C}=\tilde{P}^{\top}\left[-iI_{N},\; I_{N}\right]$ into~\eqref{Mform},~\eqref{Cform} 
and using some additional matrix transformations, we will obtain the formulas~\eqref{G},~\eqref{C}, respectively.  
This is the idea behind Lemma~\ref{lem1}. The rank constraint~\eqref{rankconstraint} indeed 
gives a sufficient and necessary stability condition 
for the original passive linear quantum system~\cite{GY16:tac,GZ15:auto}. 
As a result, it also guarantees the stability of the linear quantum system $G$ 
 based on the linear transformation theory in the control field~\cite{ZJK96:book}. 
\end{rem}


\section{The cascade realization} \label{The cascade realization}
As we have seen in Lemma~\ref{lem1}, the matrices $R$, $\Gamma$ and $P$ 
are free matrices, although they must satisfy the rank condition~\eqref{rankconstraint}. 
By varying them we can obtain different linear quantum systems  that uniquely generate 
a given pure Gaussian state. 
Based on this fact, we provide two feasible realizations of linear quantum systems for covariance assignment corresponding to pure Gaussian states, and this section is devoted to 
 the first one, the cascade realization. 


\subsection{The cascade realization}
For convenience, we denote a linear quantum system $G$ with the Hamiltonian $\hat{H}$ and the coupling vector $\hat{L}$
 as $G = (\hat{H},\; \hat{L})$. 
Suppose we have two linear quantum systems 
$G_{1}= (\hat{H}_{1},\; \hat{L}_{1})$ and $G_{2}= (\hat{H}_{2},\;\hat{L}_{2})$. 
If we feed the output of the system $G_{1}$  into the input of the system $G_{2}$, 
we will obtain a cascaded quantum system $G= G_{2}\lhd G_{1}$, as shown in 
Fig.~\ref{twocascadestructure}. 
\begin{figure}[htbp]
\begin{center}
\includegraphics[height=1.05cm]{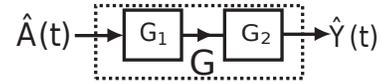}
\caption{The cascade connection of two linear quantum systems: $G= G_{2}\lhd G_{1}$. }
\label{twocascadestructure}
\end{center}
\end{figure}
Based on the quantum theory of cascaded linear quantum systems~\cite{GJ09:tac}, the 
Hamiltonian $\hat{H}$ and the coupling vector  $\hat{L}$ of the  cascaded 
system $G$ are, respectively, given by
\begin{equation}
\left\{\begin{aligned}\label{SLHformula1}
\hat{H}&=\hat{H}_{2}+\hat{H}_{1}+\frac{1}{2i}\left(\hat{L}_{2}^{\dagger}\hat{L}_{1}-\hat{L}_{1}^{\dagger}\hat{L}_{2}\right),\\
\hat{L}&=\hat{L}_{2}+\hat{L}_{1}.
 \end{aligned}\right. 
\end{equation}
This result can be extended to the cascade connection of $N$ one-dimensional 
 harmonic oscillators. 
Suppose we have $N$ one-dimensional harmonic oscillators $G_{j}$ 
with the Hamiltonian $\hat{H}_{j}=\frac{1}{2}\hat{\xi}_{j}^{\top}M_{j}\hat{\xi}_{j}$, 
$M_{j}=M_{j}^{\top}\in \mathbb{R}^{2 \times 2}$, 
$\hat{\xi}_{j}\triangleq[\hat{q}_{j}\;\hat{p}_{j}]^{\top}$, and the coupling 
vector  $\hat{L}_{j} = C_{j} \hat{\xi}_{j}$, $C_{j}\in \mathbb{C}^{K \times 2}$, 
$j=1,2,\cdots,N$. 
The system $G$ is obtained by a cascade connection of these 
harmonic oscillators, that is, $G=G_{N}\lhd \cdots \lhd G_{2}\lhd G_{1}$, as 
shown in Fig.~\ref{ncascadestructure}. By repeatedly using~\eqref{SLHformula1}, 
the Hamiltonian $\hat{H}$ and 
the coupling vector  $\hat{L}$ of the cascaded system $G$ are given by the following lemma. 
\begin{figure}[htbp]
\begin{center}
\includegraphics[height=1cm]{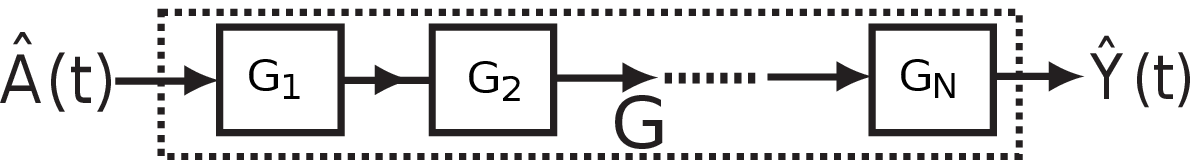}
\caption{The cascade connection of $N$ one-dimensional  harmonic oscillators: $G=G_{N}\lhd \cdots \lhd G_{2}\lhd G_{1}$.}
\label{ncascadestructure}
\end{center}
\end{figure}
\begin{lem}[\protect{\cite{N10:tac}}] \label{nurdin}
Suppose that the system $G$ is obtained via a cascade connection of the
aforementioned $N$ one-dimensional 
 harmonic oscillators $G_{j}$, $j=1,2,\cdots,N$, that is, 
$G=G_{N}\lhd \cdots \lhd G_{2}\lhd G_{1}$. 
Then the Hamiltonian $\hat{H}$ and the coupling vector   $\hat{L}$ of the 
 system $G$ are, respectively, given by 
\begin{equation}
\left\{\begin{aligned}
\hat{H}&=\frac{1}{2}\hat{x} ^{\top}M\hat{x},\quad M=\mathcal{P}_{N}\mathbb{M} \mathcal{P}_{N}^{\top}\\
\hat{L}&=C\hat{x}, \quad C=\left[
C_{1} \;\; C_{2} \;\;\cdots \;\;C_{N}
\right]\mathcal{P}_{N}^{\top},
\end{aligned}\right. \notag
\end{equation}
where $\mathbb{M} =[\mathbb{M} _{jk}]_{j,k=1,\cdots,N}$ is a symmetric block matrix with 
$\mathbb{M} _{jj}=M_{j}$, $\mathbb{M} _{jk}=\im (C_{j}^{\dagger}C_{k})$ 
whenever $j>k$ and $\mathbb{M} _{jk}=\mathbb{M} _{kj}^{\top}$ whenever 
$j<k$. 
\end{lem}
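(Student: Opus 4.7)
The plan is to prove the lemma by iterating the two-system cascade formula \eqref{SLHformula1} and then re-expressing the accumulated Hamiltonian as a quadratic form in the full quadrature vector $\hat{x}$. First I would argue by induction on $N$ that for $G^{(n)} \triangleq G_{n}\lhd \cdots \lhd G_{1}$ one has
\begin{equation*}
\hat{L}^{(n)} = \sum_{j=1}^{n}\hat{L}_{j}, \qquad \hat{H}^{(n)} = \sum_{j=1}^{n}\hat{H}_{j} + \sum_{1\le k<j\le n}\frac{1}{2i}\bigl(\hat{L}_{j}^{\dagger}\hat{L}_{k}-\hat{L}_{k}^{\dagger}\hat{L}_{j}\bigr).
\end{equation*}
The inductive step is pure bookkeeping: applying \eqref{SLHformula1} with $G_{n+1}$ playing the role of $G_{2}$ and $G^{(n)}$ the role of $G_{1}$ adds $\hat{H}_{n+1}$ plus exactly the new cross terms indexed by $j=n+1$, $k\le n$, and appends $\hat{L}_{n+1}$ to the coupling vector.

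Next I would convert each cross term into a bilinear form in the local quadratures. Because the entries of $\hat{\xi}_{j}$ and $\hat{\xi}_{k}$ commute for $j\neq k$, a direct expansion gives $\hat{L}_{j}^{\dagger}\hat{L}_{k}=\hat{\xi}_{j}^{\top}(C_{j}^{\dagger}C_{k})\hat{\xi}_{k}$, and the analogous expansion of $\hat{L}_{k}^{\dagger}\hat{L}_{j}$, combined with the matrix identity $(C_{k}^{\dagger}C_{j})^{\top}=\overline{C_{j}^{\dagger}C_{k}}$ (entrywise complex conjugate) and the swap of the two commuting quadrature vectors, yields $\hat{L}_{k}^{\dagger}\hat{L}_{j}=\hat{\xi}_{j}^{\top}\overline{C_{j}^{\dagger}C_{k}}\,\hat{\xi}_{k}$. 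Subtracting and dividing by $2i$ then gives
\begin{equation*}
\frac{1}{2i}\bigl(\hat{L}_{j}^{\dagger}\hat{L}_{k}-\hat{L}_{k}^{\dagger}\hat{L}_{j}\bigr)=\hat{\xi}_{j}^{\top}\im(C_{j}^{\dagger}C_{k})\hat{\xi}_{k},\qquad j\neq k.
\end{equation*}

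Finally I would stack the local quadratures into the interleaved vector $\hat{x}'\triangleq[\hat{\xi}_{1}^{\top}\;\cdots\;\hat{\xi}_{N}^{\top}]^{\top}$, which by the definition of $\mathcal{P}_{N}$ satisfies $\hat{x}=\mathcal{P}_{N}\hat{x}'$. The accumulated Hamiltonian then takes the form
\begin{equation*}
\hat{H}=\tfrac{1}{2}\sum_{j}\hat{\xi}_{j}^{\top}M_{j}\hat{\xi}_{j}+\sum_{j>k}\hat{\xi}_{j}^{\top}\im(C_{j}^{\dagger}C_{k})\hat{\xi}_{k}=\tfrac{1}{2}\hat{x}'^{\top}\mathbb{M}\hat{x}',
\end{equation*}
where I set $\mathbb{M}_{jj}=M_{j}$, $\mathbb{M}_{jk}=\im(C_{j}^{\dagger}C_{k})$ for $j>k$, and $\mathbb{M}_{jk}=\mathbb{M}_{kj}^{\top}$ for $j<k$; the symmetrization exploits the inter-mode commutativity to rewrite $\hat{\xi}_{k}^{\top}B\hat{\xi}_{j}$ as $\hat{\xi}_{j}^{\top}B^{\top}\hat{\xi}_{k}$. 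Conjugating by $\mathcal{P}_{N}$ delivers $M=\mathcal{P}_{N}\mathbb{M}\mathcal{P}_{N}^{\top}$, and $\hat{L}=\sum_{j}C_{j}\hat{\xi}_{j}=[C_{1}\;\cdots\;C_{N}]\mathcal{P}_{N}^{\top}\hat{x}$ gives the formula for $C$.

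The main obstacle I anticipate is the factor-of-two bookkeeping in the symmetrization: each cascade cross term appears exactly once in the sum $\sum_{j>k}$, and its full contribution must be placed in the lower-triangular block $\mathbb{M}_{jk}$, so that the prefactor $\tfrac{1}{2}$ in $\tfrac{1}{2}\hat{x}'^{\top}\mathbb{M}\hat{x}'$ is compensated by the doubling induced by the transpose in the upper block $\mathbb{M}_{kj}^{\top}$. A sign or index slip here would produce a spurious factor of $2$ or $\tfrac{1}{2}$; everything else reduces to routine induction and the coordinate relabelling via $\mathcal{P}_{N}$.
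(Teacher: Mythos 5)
Your proposal is correct and follows exactly the route the paper indicates: the paper gives no proof of Lemma~\ref{nurdin} (it is quoted from \cite{N10:tac}, prefaced only by ``by repeatedly using~\eqref{SLHformula1}''), and your induction on the series product, the identity $\frac{1}{2i}(\hat{L}_{j}^{\dagger}\hat{L}_{k}-\hat{L}_{k}^{\dagger}\hat{L}_{j})=\hat{\xi}_{j}^{\top}\im(C_{j}^{\dagger}C_{k})\hat{\xi}_{k}$, and the reordering via $\mathcal{P}_{N}$ fill in that sketch correctly, including the factor-of-two bookkeeping in the symmetrization.
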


It can be seen  from Lemma~\ref{nurdin} that due to the cascade feature, the Hamiltonian 
matrix $M$ and the coupling matrix $C$ of the cascaded system $G$ depend on each other 
in a complicated way. 
Nevertheless, given any pure Gaussian state, we can always construct a cascade 
connection of several one-dimensional  harmonic oscillators such 
that this cascaded quantum system is asymptotically stable and achieves the covariance matrix  corresponding to the desired target pure Gaussian state.  
The result is stated as follows. 

\begin{thm} \label{theorem1}
Any $N$-mode pure Gaussian state can be uniquely generated by constructing a cascade of $N$ one-dimensional harmonic oscillators. 
\end{thm}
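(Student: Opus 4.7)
The plan is to exploit the parametrization of Lemma~\ref{lem1} with a judicious choice of the free matrices that forces the overall system to be purely dissipative, and then to recognise this purely dissipative system as the desired cascade $G=G_{N}\lhd\cdots\lhd G_{1}$ of $N$ single-mode oscillators. First I would set $K=N$, $R=0_{N\times N}$, $\Gamma=0_{N\times N}$ and $P=Y^{-1/2}$ (the unique positive-definite square root, which exists since $Y>0$). Substituting into Lemma~\ref{lem1} immediately yields the Hamiltonian matrix $M=0$ and the coupling matrix $C=Y^{-1/2}[-Z\;\;I_{N}]$, so the system has $\hat{H}=0$ and coupling vector $\hat{L}=Y^{-1/2}(\hat{p}-Z\hat{q})$. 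Because $Y^{-1/2}$ is invertible, the rank condition~\eqref{rankconstraint} collapses to $\rank(P)=N$, which is automatic, so by Lemma~\ref{lem1} this system is asymptotically stable and uniquely generates the target pure Gaussian state.

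Next I would try to realise exactly this system as the proposed cascade, by taking each $G_{j}$ to be a one-dimensional harmonic oscillator with trivial Hamiltonian $\hat{H}_{j}=0$ and coupling vector $\hat{L}_{j}=C_{j}\hat{\xi}_{j}$, where $C_{j}\triangleq Y^{-1/2}[-Z_{j}\;\;e_{j}]\in\mathbb{C}^{N\times 2}$ and $Z_{j}$ denotes the $j$-th column of $Z$. The coupling half of Lemma~\ref{nurdin} is then immediate: reassembling the $C_{j}$'s via $C=[C_{1}\cdots C_{N}]\mathcal{P}_{N}^{\top}$ recovers $Y^{-1/2}[-Z\;\;I_{N}]$ by construction. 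What remains is to check that the Hamiltonian part of Lemma~\ref{nurdin} also reproduces $M=0$, i.e.\ that $\mathbb{M}=0$; since each $M_{j}=0$, it suffices to verify $\im(C_{j}^{\dagger}C_{k})=0_{2\times 2}$ for every $j>k$.

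The hard part is this last computation, and it hinges entirely on the choice $P=Y^{-1/2}$. Writing $Z=X+iY$ and using the symmetry of $X$ and $Y$, one obtains
\begin{align*}
C_{j}^{\dagger}C_{k} &= \begin{bmatrix} \bar{Z}_{j}^{\top}Y^{-1}Z_{k} & -\bar{Z}_{j}^{\top}Y^{-1}e_{k}\\ -e_{j}^{\top}Y^{-1}Z_{k} & e_{j}^{\top}Y^{-1}e_{k}\end{bmatrix}\\
&= \begin{bmatrix}(XY^{-1}X+Y)_{jk} & -(XY^{-1})_{jk}+i\delta_{jk}\\ -(Y^{-1}X)_{jk}-i\delta_{jk} & (Y^{-1})_{jk}\end{bmatrix},
\end{align*}
whose every imaginary entry vanishes when $j\neq k$. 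The critical cancellation in the $(1,1)$ block comes from the identity $(X-iY)Y^{-1}(X+iY)=XY^{-1}X+Y$, which is precisely what the choice $P=Y^{-1/2}$ was made to produce. Locating this particular $P$ is the main obstacle; once it is in place, the verifications above are routine bookkeeping, and the cascade $G_{N}\lhd\cdots\lhd G_{1}$ reproduces $(M,C)$ from the first step and hence uniquely generates the target pure Gaussian state.
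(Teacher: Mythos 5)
Your proposal is correct and follows essentially the same route as the paper: the paper's construction is exactly your cascade with $\hat{H}_{j}=0$ and $C_{j}=iY^{-1/2}[-Z_{j}\;\;e_{j}]$ (the extra global factor $i$ is immaterial since only $C_{j}^{\dagger}C_{k}$ enters), and the decisive cancellation $\im(C_{j}^{\dagger}C_{k})=0$ for $j\neq k$ is the same computation you carry out. The only cosmetic difference is that you certify stability and covariance assignment by invoking Lemma~\ref{lem1} with $R=\Gamma=0$, $P=Y^{-1/2}$, whereas the paper verifies $\mathcal{A}=-I_{2N}$ and the Lyapunov equation directly.
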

\begin{proof}
We prove this result by construction. 
Recall that for an arbitrary $N$-mode pure Gaussian state, the corresponding covariance matrix  $V$ has the factorization 
shown in~\eqref{covariance}. 
Using the matrices $X$ and $Y$ obtained from~\eqref{covariance}, we construct 
a cascaded system $G=G_{N}\lhd\cdots\lhd G_{2}\lhd G_{1}$ with the Hamiltonian 
$\hat{H}_{j}$ and the coupling vector  $\hat{L}_{j}$, $j=1,2,\cdots,N$, given by 
\begin{equation} \notag
\left\{\begin{aligned}
\hat{H}_{j}&=0,\\
\hat{L}_{j}&=C_{j}\hat{\xi}_{j},\; C_{j}
=iY^{-\frac{1}{2}}\left[-Z\;I_{N}\right]\mathcal{P}_{N}\begin{bmatrix}
0_{(2j-2)\times 2}\\
I_{2}\\
0_{(2N-2j)\times 2}
\end{bmatrix}.
 \end{aligned}\right.
\end{equation}
 Using Lemma~\ref{nurdin}, we can calculate the Hamiltonian 
$\hat{H}=\frac{1}{2}\hat{x}^{\top}M\hat{x}$ and the coupling vector 
$\hat{L}=C\hat{x}$ for the cascaded system $G$. We find that
$M=0$ and $C=iY^{-1/2}\left[-Z\;\;I_{N}\right]$. 
Then it follows from the QSDE~\eqref{QSDEequation} that
\begin{align*}
\mathcal{A}&=\Sigma(M+\im(C^{\dagger}C))\\
&=\Sigma \im \left(\begin{bmatrix}
(X-iY)Y^{-1}(X+iY) &-(X-iY)Y^{-1}\\
-Y^{-1}(X+iY) & Y^{-1}
\end{bmatrix}\right)\\
&=\Sigma\Sigma =-I_{2N},\\
\mathcal{D}&\triangleq\frac{1}{2}\mathcal{B}\mathcal{B}^{\dagger}=\Sigma\re(C^{\dagger}C)\Sigma^{\top}\\
&=\Sigma\re \left(\begin{bmatrix}
(X-iY)Y^{-1}(X+iY) &-(X-iY)Y^{-1}\\
-Y^{-1}(X+iY) & Y^{-1}
\end{bmatrix}\right) \Sigma^{\top} \\
&=\begin{bmatrix}
Y^{-1} &Y^{-1}X\\
XY^{-1} &XY^{-1}X+Y
\end{bmatrix}.
\end{align*}
Clearly, $\mathcal{A}$ is Hurwitz. Furthermore, it can be verified that 
\begin{align}
&\mathcal{A}V +V \mathcal{A}^{\top}+\mathcal{D}=0.\label{lyapunov2}
\end{align}  

The  stability of $\mathcal{A}$ and the Lyapunov 
equation~\eqref{lyapunov2} guarantee that  the cascaded system $G$ constructed above is asymptotically stable 
and  achieves the covariance matrix $V$. In other words, the cascaded system $G$ uniquely generates  the desired target pure Gaussian state.
\hfill  
\end{proof}

\subsection{Example} 
\begin{exmp} \label{exam1}
We consider the generation of {\it two-mode squeezed states}~\cite{MFL11:pra}. 
Two-mode squeezed states are highly symmetric entangled states, which are 
very useful in several quantum information protocols such as quantum 
teleportation~\cite{IYYYF12:tac}. 
The covariance matrix  $V$ corresponding to a two-mode squeezed state is
\begin{align} \label{twomode-covariance}
V=\frac{1}{2}\begin{bmatrix}
\cosh(2\alpha) &\sinh(2\alpha) &0 &0\\
\sinh(2\alpha) &\cosh(2\alpha) &0 &0\\
0 &0 &\cosh(2\alpha) &-\sinh(2\alpha)\\
0 &0 &-\sinh(2\alpha) &\cosh(2\alpha)
\end{bmatrix},
\end{align}
where $\alpha$ is the squeezing parameter. 
Using the factorization~\eqref{covariance}, we have $
X=0$ and $Y=\begin{bmatrix}
\cosh(2\alpha) &-\sinh(2\alpha)\\
 -\sinh(2\alpha) &\cosh(2\alpha)
\end{bmatrix}$. Therefore, the graph corresponding to a two-mode squeezed state is given by $Z=X+iY=\begin{bmatrix}
i\cosh(2\alpha) &-i\sinh(2\alpha)\\
 -i\sinh(2\alpha) &i\cosh(2\alpha)
\end{bmatrix}$.

Next we provide two different cascade realizations. 
The first one, Realization~1, is constructed based on a heuristic derivation, while   
the second one,  Realization~2, is constructed 
based on the proof of Theorem~\ref{theorem1}.  
\begin{real}
In this cascade realization, the subsystems 
$G_{1}=(\hat{H}_{1},\;\hat{L}_{1})$ and $G_{2}=(\hat{H}_{2},\;\hat{L}_{2})$ are, respectively,  given by 
\begin{equation}  
\left\{\begin{aligned} 
\hat{H}_{1}&=\frac{1}{2}\hat{\xi}_{1}^{\top}\begin{bmatrix}
2 &Q_{1}\\
Q_{1} &2
\end{bmatrix}\hat{\xi}_{1},\quad \; \hat{L}_{1}=[iQ_{2}\; 1 ]\hat{\xi}_{1}, \\
\hat{H}_{2}&=-\frac{1}{2}\hat{\xi}_{2}^{\top}\begin{bmatrix}
2 &Q_{1}\\
Q_{1} &2
\end{bmatrix}\hat{\xi}_{2},\;\;\;   \hat{L}_{2}=[iQ_{2}\; 1 ]\hat{\xi}_{2}, 
 \end{aligned}\right. \notag
\end{equation}
where $Q_{1}\triangleq\frac{\sinh^{2}(2\alpha)}{\cosh(2\alpha)}-\sinh(2\alpha)$ and  $Q_{2}\triangleq\sinh(2\alpha)-\cosh(2\alpha)$.
It can be proved that the cascaded system $G=G_{2}\lhd G_{1}$ is asymptotically stable 
and achieves the covariance matrix~\eqref{twomode-covariance}. 
The proof is similar to that of Theorem~\ref{theorem1}, and hence is omitted. 
Using the result in~\cite{NJD09:siamjco}, a corresponding quantum optical realization is provided in Fig.~\ref{two-mode-realization1}. For each subsystem $G_{j}$, $j=1,\;2$, the Hamiltonian $\hat{H}_{j}$ is realized by a nonlinear crystal pumped by a classical field, and 
the coupling operator $\hat{L}_{j}$ is realized by implementing an auxiliary cavity. 
This auxiliary cavity interacts with the subsystem via a cascade of a pumped 
 crystal and a beam splitter. 
It has a fast mode that can be adiabatically eliminated.  
\end{real}
\begin{figure}[htbp]
\begin{center}
\includegraphics[height=3.2cm]{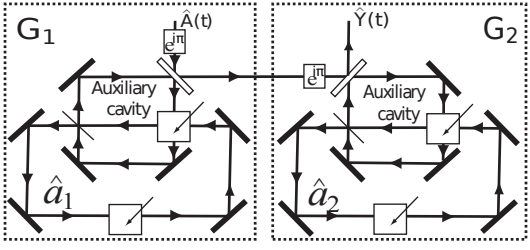}
\caption{An optical cascade realization of the two-mode linear quantum system 
that uniquely generates a two-mode squeezed state.  
The square with an arrow represents a pumped  crystal.  
The symbol $e^{i\pi}$ with a square on it represents a phase shift $\pi$. Solid (dark) rectangles 
denote perfectly reflecting mirrors, while unfilled rectangles denote partially 
transmitting mirrors. 
The dark line ``\textbf{$\mathbb{\diagdown}$}'' represents an optical beam 
splitter.
}
\label{two-mode-realization1}
\end{center}
\end{figure}

\begin{real}
The second realization is constructed according to the method shown in the 
proof of Theorem~\ref{theorem1}. 
By direct calculation, the subsystems 
$G_{1}=(\hat{H}_{1},\;\hat{L}_{1})$ and $G_{2}=(\hat{H}_{2},\;\hat{L}_{2})$ are, respectively,  given by  
\begin{equation}  
\left\{\begin{aligned} 
\hat{H}_{1}&=0,\quad \hat{L}_{1}=\begin{bmatrix}
\cosh(\alpha) &i\cosh(\alpha)\\
-\sinh(\alpha) &i\sinh(\alpha)
\end{bmatrix}\hat{\xi}_{1},\\
 \hat{H}_{2}&=0,\quad \hat{L}_{2}=\begin{bmatrix}
-\sinh(\alpha) &i\sinh(\alpha)\\
\cosh(\alpha) &i\cosh(\alpha)
\end{bmatrix}\hat{\xi}_{2}. 
 \end{aligned}\right. \notag
\end{equation}

Using the result in~\cite{NJD09:siamjco}, a corresponding quantum optical realization of 
such a cascaded quantum system $G=G_{2}\lhd G_{1}$ is provided in 
Fig.~\ref{two-mode-realization2}. 
This cascaded system $G$ has two crucial features. 
First, because $\hat{H}_{1}=\hat{H}_{2}=0$, 
implementations of the Hamiltonians involve no pumped  crystals. 
Second, the first component of the coupling vector  
$\hat{L}_{1}=[\hat{c}_{1,1}\; \hat{c}_{1,2}]^\top$ is
$
\hat{c}_{1,1}= \begin{bmatrix}
\cosh(\alpha) &i\cosh(\alpha)
\end{bmatrix}\begin{bmatrix}
\hat{q}_{1}\\
\hat{p}_{1}
\end{bmatrix}=\sqrt{2}\cosh(\alpha)\hat{a}_{1}$,
where $\hat{a}_{1}=(\hat{q}_1+i\hat{p}_1)/\sqrt{2}$ denotes the annihilation 
operator of the first mode. 
This operator $\hat{c}_{1,1}$ represents the standard linear dissipation of a cavity mode 
into a continuum of field modes outside of the cavity. 
A similar case also occurs in the coupling vector  $\hat{L}_{2}$. 
As can be seen in Fig.~\ref{two-mode-realization2}, Realization~2 requires two pumped 
 crystals, in contrast to the case of Realization~1, where four pumped  crystals are used. 
From this viewpoint,  Realization~2, which is constructed based on our result, has a clear advantage over Realization~1. 

\begin{figure}[htbp]
\begin{center}
\includegraphics[height=3.85cm]{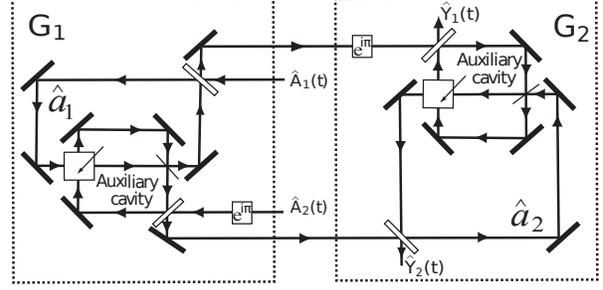}
\caption{
Another optical cascade realization of the two-mode linear quantum system 
that uniquely generates a two-mode squeezed state. }
\label{two-mode-realization2}
\end{center}
\end{figure}
\end{real}
\end{exmp}


\section{The locally dissipative  realization} \label{The locally dissipative  realization}
In this section, we describe the second realization of linear quantum systems for covariance assignment corresponding to  pure Gaussian states.  Unlike the cascade realization, the  locally dissipative realization  cannot generate all pure Gaussian states, but as shown later the class of stabilizable states is fairly broad.

\subsection{The locally dissipative  realization}
As we have noted in Section~\ref{Preliminaries}, the coupling vector   $\hat{L}$ is an operator-valued vector that consists of $K$ elements, i.e., 
$\hat{L}=\left[
\hat{c}_{1} \;\hat{c}_{2} \;\cdots \;\hat{c}_{K}
\right]^{\top}$. Each element $\hat{c}_{j}$, $j=1,2,\cdots,K$, called a Lindblad operator, represents
an interaction between the system and its environment. 
A Lindblad operator $\hat{c}_{j}$ is said to be local if it acts only on one system mode. 
As an example, consider the system depicted in~Fig.~\ref{local}. 
The Lindblad operator $\hat{c}_{1}=\hat{q}_{1}+\hat{p}_{1}$ acts only on the first system mode, so it is a local 
operator. 
On the other hand, the Lindblad operator $\hat{c}_{2}=\hat{q}_{1}+\hat{q}_{2}$ 
acts on two system modes, so by definition it is not a  local operator.  If all the Lindblad operators in $\hat{L}$ 
are local, then the system
is called a locally dissipative quantum system. A locally dissipative quantum system could be relatively easy to
implement in practice. Therefore, we would like to characterize the class of pure Gaussian states that 
can be generated using locally dissipative quantum systems.
The result is given by the following theorem. 
\begin{figure}[htbp]
\begin{center}
\includegraphics[height=1.2cm]{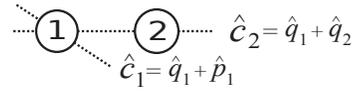}
\caption{An illustration of local Lindblad operators. 
$\hat{c}_{1}$ is a local Lindblad operator, while $\hat{c}_{2}$ is not a local one.}
\label{local}
\end{center}
\end{figure}

\begin{thm}\label{thm2}
Let $V$ be the covariance matrix  corresponding to  a given $N$-mode pure Gaussian state. 
Assume that it is expressed in the factored form~\eqref{covariance}. 
Then this pure Gaussian state can be uniquely generated in an $N$-mode locally dissipative quantum system 
if and only if there exists an integer $\ell$, $1\le \ell\le N$, such that 
\begin{align}
Z_{(\ell, j)}=Z_{(j,\ell )}=0,\quad \forall j\ne \ell \;\; \text{and}\;\; 1\le j\le N, \label{thm21}
\end{align}
where $Z_{(\ell, j)}$ denotes the $(\ell,j)$ element of the graph matrix $Z=X+iY$ for the  pure Gaussian state.
\end{thm}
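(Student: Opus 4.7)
The plan is to invoke Lemma~\ref{lem1}, which parametrizes every system that generates the target pure Gaussian state by matrices $R$, $\Gamma$, $P$, and to translate local dissipativity into algebraic conditions on these data and on $Z$. For necessity, Lemma~\ref{lem1} yields
\[
\hat{c}_k=\sum_{j=1}^N (P^\top)_{kj}\Big(\hat{p}_j-\sum_{i=1}^N Z_{ji}\hat{q}_i\Big).
\]
For $\hat{c}_k$ to act only on some single mode $m_k$, the $\hat{p}_j$ coefficients with $j\ne m_k$ must vanish, forcing $(P^\top)_{kj}=0$ for $j\ne m_k$; the $\hat{q}_i$ coefficient then reduces to $-(P^\top)_{k,m_k}Z_{m_k,i}$, which must vanish for $i\ne m_k$ whenever $\hat{c}_k$ is nontrivial. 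The rank condition~\eqref{rankconstraint} forbids $P=0$, so at least one nontrivial $\hat{c}_k$ exists; setting $\ell=m_k$ and using the symmetry of $Z$ gives~\eqref{thm21}.

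\textbf{Sufficiency setup.} Conversely, assume~\eqref{thm21} for some $\ell$. Then both $X$ and $Y$ are block diagonal with respect to the partition $\{\ell\}\sqcup(\{1,\ldots,N\}\setminus\{\ell\})$, so $Y^{1/2}$ and $Y^{-1/2}$ inherit this block structure and $Y^{1/2}e_\ell=\sqrt{Y_{\ell\ell}}\,e_\ell$, where $e_\ell$ denotes the $\ell$-th standard basis vector of $\mathbb{C}^N$. I would take $K=1$ and $P=e_\ell$; by Lemma~\ref{lem1} the single resulting Lindblad operator is
\[
\hat{c}_1=e_\ell^\top[-Z\;\;I_N]\hat{x}=-Z_{\ell\ell}\hat{q}_\ell+\hat{p}_\ell,
\]
which acts only on mode $\ell$, so the system is automatically locally dissipative. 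It remains to choose real $R=R^\top$ and $\Gamma=-\Gamma^\top$ so that the rank condition~\eqref{rankconstraint}---which with $K=1$ is just the controllability of $(Q,e_\ell)$ with $Q=-iRY+Y^{-1}\Gamma$---is satisfied.

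\textbf{Sufficiency completion and main obstacle.} For this step I would use the similarity transform $\tilde{Q}=Y^{1/2}QY^{-1/2}=-i\tilde{R}+\tilde{\Gamma}$ with $\tilde{R}=Y^{1/2}RY^{1/2}$ (real symmetric) and $\tilde{\Gamma}=Y^{-1/2}\Gamma Y^{-1/2}$ (real antisymmetric); this is a bijection, so equivalently $\tilde{Q}$ ranges over all anti-Hermitian matrices. Since $Y^{1/2}e_\ell\propto e_\ell$, controllability of $(Q,e_\ell)$ is equivalent to that of $(\tilde{Q},e_\ell)$. As a concrete witness I would take $\tilde{Q}$ of ``arrow'' form: diagonal entries $i\lambda_1,\ldots,i\lambda_N$ with the $\lambda_j$ ($j\ne\ell$) real and pairwise distinct, off-diagonal entries $\tilde{Q}_{\ell,j}=-\tilde{Q}_{j,\ell}=1$ for $j\ne\ell$, and all other off-diagonals zero. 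Computing $\tilde{Q}^m e_\ell$ inductively shows that, for $j\ne\ell$, the $e_j$-component is a polynomial of degree $m-1$ in $\lambda_j$ with leading coefficient $i^{m-1}$; after factoring these out, the controllability determinant collapses to a Vandermonde determinant in the $\lambda_j$ ($j\ne\ell$), hence equals a nonzero multiple of $\prod_{j<k;\,j,k\ne\ell}(\lambda_k-\lambda_j)\ne 0$. Pulling back via $R=Y^{-1/2}\tilde{R}Y^{-1/2}$ and $\Gamma=Y^{1/2}\tilde{\Gamma}Y^{1/2}$ gives valid real matrices; Lemma~\ref{lem1} together with Remark~\ref{rmk2} (the rank condition implies Hurwitz stability) then completes the proof. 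The main obstacle is this explicit rank verification: the tilde transform makes the freedom in $(R,\Gamma)$ manifest as the freedom to pick an arbitrary anti-Hermitian $\tilde{Q}$, and the arrow ansatz then reduces what at first looks like an $N$-step Krylov controllability test to a single Vandermonde determinant.
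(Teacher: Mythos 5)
Your proposal is correct, and in both directions it follows the same overall route as the paper: necessity is proved by exactly the same coefficient-matching computation on $\hat{c}_k=P_k^{\top}[-Z\;\,I_N]\hat{x}$ (local support of the $\hat{p}$-part forces $P_k\propto e_\ell$, which then forces $Z_{(\ell,j)}=0$ for $j\ne\ell$), and sufficiency is proved by taking $P$ supported on mode $\ell$ and exhibiting admissible $(R,\Gamma)$ for which the rank condition \eqref{rankconstraint} reduces to a Vandermonde determinant. The only genuine difference is the explicit witness in the sufficiency step. The paper sets $Q=Y^{-1/2}U_1U_2^{\dagger}\Lambda U_2U_1^{\dagger}Y^{1/2}$ with $\Lambda$ diagonal anti-Hermitian and unitaries $U_1,U_2$ chosen so that $U_2U_1^{\dagger}Y^{1/2}P$ is the all-ones vector; the Krylov chain then becomes the Vandermonde matrix immediately, and notably this works without ever using that $Y^{1/2}e_\ell\propto e_\ell$. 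You instead observe that \eqref{thm21} makes $Y^{1/2}$ block-diagonal so that $(Q,e_\ell)$ and $(\tilde{Q},e_\ell)$ are simultaneously controllable, let $\tilde{Q}$ range over all anti-Hermitian matrices, and pick an arrow-form $\tilde{Q}$, extracting the Vandermonde via an induction on the degree of the $e_j$-components of $\tilde{Q}^m e_\ell$. Both are valid; the paper's unitary conjugation avoids the induction, while your version makes the available freedom in $(R,\Gamma)$ more transparent. (Your stated leading coefficient $i^{m-1}$ should be $-i^{m-1}$ since $(\tilde{Q}e_\ell)_j=\tilde{Q}_{j\ell}=-1$, but this is immaterial to the rank argument.)
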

  
\begin{proof}
We prove the sufficiency part by construction. 
Equation~\eqref{thm21} implies that there exists a row
vector  
$\Upsilon= \begin{bmatrix}
 0_{1\times (\ell-1)}\;\;\tau_{1}\;\; 0_{1\times (N-\ell)}
 \end{bmatrix}$ with $\tau_{1}\ne 0$, such that
$
 \Upsilon Z = \begin{bmatrix}
 0_{1\times (\ell-1)}\;\;\tau_{2} \;\; 0_{1\times (N-\ell)}
\end{bmatrix}$, 
where $\tau_{2}=\tau_{1}Z_{(\ell, \ell)}$. 
Using the Gram-Schmidt method, we can create three $N\times N$ matrices 
$U_{1}$, $U_{2}$ and $\Lambda$, where $U_{1}$ is a unitary matrix with 
the first column being $\frac{Y^{1/2}\Upsilon^{\top}}{\norm{Y^{1/2}\Upsilon^{\top}}}$; 
$U_{2}$ is a unitary matrix with the first column being $\frac{1}{\sqrt{N}} 
\begin{bmatrix} 1 &1 &\cdots &1
\end{bmatrix}^{\top}$ and $\Lambda$ is a purely imaginary matrix 
$\Lambda=i\diag[\alpha_{1},\cdots, \alpha_{N}]$, with $\alpha_{j}\in \mathbb{R}$, 
$j=1,\cdots, N$, and $\alpha_{j} \ne \alpha_{k}$,  $\forall j\ne k$.

Let $P=\Upsilon^{\top}$, $R=-Y^{-1/2}\im(U_{1}U_{2}^{\dagger}\Lambda U_{2}U_{1}^{\dagger} )Y^{-1/2}$ 
and $\Gamma =Y^{1/2}\re(U_{1}U_{2}^{\dagger}\Lambda U_{2}U_{1}^{\dagger} )Y^{1/2}$ in~\eqref{G},~\eqref{C}. 
Then it can be  verified that $R=R^{\top}$, $\Gamma=-\Gamma^{\top}$. 
Moreover, substituting $Q=-iRY+Y^{-1}\Gamma = Y^{-1/2}U_{1}U_{2}^{\dagger}\Lambda U_{2}U_{1}^{\dagger} Y^{1/2}$ into~\eqref{rankconstraint} yields 
 \begin{align*}
 &\rank\left(\begin{bmatrix}P &QP &\cdots &Q^{N-1}P\end{bmatrix}\right)\\
=&\rank \left( \left[
U_{2}U_{1}^{\dagger}Y^{1/2}\Upsilon^{\top} \;\Lambda U_{2}U_{1}^{\dagger}Y^{1/2}\Upsilon^{\top} \cdots \right.\right. \\
&\hspace{4cm}\left.\left.\;\Lambda^{N-1} U_{2}U_{1}^{\dagger}Y^{1/2}\Upsilon^{\top}
\right]\right)\\
=&\rank \left(\frac{\norm{Y^{1/2}\Upsilon^{\top}}}{\sqrt{N}} \begin{bmatrix}1 &(i\alpha_{1}) &\cdots &(i\alpha_{1})^{N-1}\\1 &(i\alpha_{2}) &\cdots &(i\alpha_{2})^{N-1}\\
\vdots &\vdots &\cdots &\vdots\\
1 &(i\alpha_{N}) &\cdots &(i\alpha_{N})^{N-1}\\\end{bmatrix}\right)\\
=&N.
 \end{align*}
Here we have used the full rank property of a Vandermonde matrix. 
Hence the rank condition~\eqref{rankconstraint} is satisfied. Based on Lemma~\ref{lem1}, we now  obtain an $N$-mode locally dissipative quantum system that is asymptotically stable and achieves the given covariance matrix.  The coupling vector   $\hat{L}$ of the  system, which consists of only one Lindblad operator, is given by 
\begin{align*}
C&=P^{\top}\left[-Z \;\; I_{N}\right]=\left[-\Upsilon Z \;\; \Upsilon\right]\\
&=\begin{bmatrix}
0_{1\times (\ell-1)} &-\tau_{2} & 0_{1\times (N-\ell)} & 0_{1\times (\ell-1)} &\tau_{1} & 0_{1\times (N-\ell)}
\end{bmatrix},\\
\hat{L}&=C\hat{x}=-\tau_{2}\hat{q}_{\ell}+\tau_{1}\hat{p}_{\ell}.
\end{align*}
We see that $\hat{L}$ acts only on the $\ell$th mode, and hence it is local. The  Hamiltonian $\hat{H}$ of the system can also be obtained by directly substituting the matrices $R$ and $\Gamma$ above into~\eqref{G}.  
This completes the sufficiency part of the proof.

Next we prove the necessity part. 
Suppose an $N$-mode pure Gaussian state can be uniquely generated in an $N$-mode locally dissipative quantum system. 
Based on Lemma~\ref{lem1}, there exists a $P\in \mathbb{C}^{N\times K}$, 
$P\ne 0$, such that the coupling vector   $
\hat{L}=P^{\top}\left[-Z\;\; I_{N}\right]\hat{x} $
is local. 
Let $P_{k}$, $1\le k\le K$, be a nonzero column in the matrix $P$. 
Then the corresponding Lindblad operator 
$\hat{c}_{k}=P_{k}^{\top}\left[-Z\; I_{N}\right]\hat{x}$ is local. Suppose that $\hat{c}_{k}$ acts on the $\ell$th mode of the system. Then we have 
\begin{align*}
&P_{k}^{\top}\left[-Z\;\; I_{N}\right]\\
=&\begin{bmatrix}
0_{1\times (\ell-1)} &\tau_{3} & 0_{1\times (N-\ell)} & 0_{1\times (\ell-1)} &\tau_{4} & 0_{1\times (N-\ell)}
\end{bmatrix},
\end{align*}
where $\tau_{3}$ and $\tau_{4}$ are complex numbers. 
It follows that 
  \begin{align}
   P_{k}^{\top}Z&= \begin{bmatrix}
 0_{1\times (\ell-1)} &-\tau_{3}  & 0_{1\times (N-\ell)}
 \end{bmatrix},\label{pf1}\\
 P_{k}^{\top}&= \begin{bmatrix}
 0_{1\times (\ell-1)} &\tau_{4} & 0_{1\times (N-\ell)} 
 \end{bmatrix},\quad \tau_{4}\ne 0.\label{pf2}
  \end{align}
Substituting~\eqref{pf2} into~\eqref{pf1} gives
  \begin{align*}
    P_{k}^{\top}Z
   =&\tau_{4}\begin{bmatrix}
  Z_{(\ell, 1)}  &Z_{(\ell, 2)} &\cdots &Z_{(\ell, N)}
  \end{bmatrix}\\
  =&\begin{bmatrix}
 0_{1\times (\ell-1)} &-\tau_{3}  & 0_{1\times (N-\ell)}
 \end{bmatrix}.
  \end{align*} 
Since $\tau_{4}\ne 0$, we have $Z_{(\ell, j)}=0$, $\forall j\ne \ell$. Since $Z=Z^{\top}$, we have $Z_{(\ell, j)}=Z_{(j,\ell)}=0$, $\forall j\ne \ell$. 
That is, Equation~\eqref{thm21} holds. 
This completes the proof. \hfill  
\end{proof}

\begin{rem}
The basic idea of Theorem~\ref{thm2} is that for any choice of $P\ne 0$, 
there always exist matrices $R=R^{\top}$ and $\Gamma=-\Gamma^{\top}$ 
such that the rank condition~\eqref{rankconstraint} is satisfied. 
So we can first specify a matrix $P$ such that the coupling matrix $C$ in~\eqref{C}
has a local structure. 
After obtaining $P$, we determine the other two matrices $R$ and 
$\Gamma$ to get a system Hamiltonian, under the rank constraint  
\eqref{rankconstraint}. 
Generally, for a given nonzero matrix $P$, we have infinite solutions 
$\left(R,\;\;\Gamma\right)$ that satisfy the rank condition 
\eqref{rankconstraint}. 
Different choices of $\left(R,\;\;\Gamma\right)$ lead to different system 
Hamiltonians. 
The optimization problem over these Hamiltonians is beyond 
the scope of this paper and is not considered, but in the next subsection we 
will show a specific recipe for determining those matrices $\left(R,\;\;\Gamma\right)$. 
\end{rem}

\begin{rem}
Suppose an $N$-mode pure Gaussian state is generated in an $N$-mode dissipative quantum system and the $\ell$th mode is locally coupled to the environment. Then from Equation~\eqref{thm21}, it is straightforward to see that the $\ell$th mode is not entangled with the rest of the system modes when the system achieves the steady state. 
\end{rem}
  
\subsection{Examples} 
\begin{exmp} \label{exam2}
We consider the generation  of {\it canonical Gaussian cluster states}, 
which serve as an essential resource in quantum computation with continuous variables~\cite{BP03:book,MFL11:pra,MLGWRN06:prl}. We mention that an interesting class of cluster states, called \emph{bilayer square-lattice continuous--variable cluster states}, has been proposed recently in~\cite{AWSCPM15:arxiv}. This class of cluster states has some practical advantages over canonical Gaussian cluster states for quantum computation~\cite{AWSCPM15:arxiv}. For the sake of simplicity, we use canonical Gaussian cluster states to illustrate the developed theory. 
The covariance matrix  $V$ corresponding to an $N$-mode canonical Gaussian cluster state is given by
$
V=\frac{1}{2}\begin{bmatrix}
e^{2\alpha}I_{N} &e^{2\alpha}B\\
e^{2\alpha}B &e^{-2\alpha}I_{N}+e^{2\alpha}B^{2}
\end{bmatrix}$, 
where $B=B^{\top}\in\mathbb{R}^{ N\times N}$ and $\alpha $ is 
the squeezing parameter. 
Note that in the limit $\alpha\rightarrow \infty$, the canonical Gaussian cluster state  
approximates the corresponding ideal cluster state. 
Using~\eqref{covariance}, we obtain $X=B$ and $Y=e^{-2\alpha}I_{N}$. The graph corresponding to a canonical Gaussian cluster state is given by $Z=X+iY=B+ie^{-2\alpha}I_{N}$.

Let us consider a simple case where 
\begin{align}
\label{first cluster example}
X=\begin{bmatrix}
0 &1 &0 &0\\
1 &0 &1  &0\\
0 &1 &0 &0\\
0 &0 &0 &\sqrt{2}
\end{bmatrix}, \quad Y=e^{-2\alpha}I_{4}. 
\end{align}
These matrices satisfy $X_{4j}=0$ and $Y_{4j}=0$ for all $j\neq 4$. Thus by
Theorem~\ref{thm2}, the corresponding canonical Gaussian cluster state can be generated
in a $four$-mode locally dissipative system.  
To construct such a  system, let us take $P=[0\; 0\; 0\; 1]^\top$ in Lemma~\ref{lem1}. 
The next step is to determine the other system parameters 
$R$ and $\Gamma$. 
For a practical implementation, one of the basic requirements on the system is that, as mentioned before, the system has as few  pumped  crystals as possible. 
Motivated by the structure of the passive quantum systems described in Remark~\ref{rmk2}, 
we choose $R=0_{4\times 4}$. 
As a result, the Hamiltonian matrix is $
        M=\begin{bmatrix}
                -e^{2\alpha}(\Gamma X+X\Gamma^\top) & e^{2\alpha}\Gamma \\
                 e^{2\alpha}\Gamma^\top & 0_{4\times 4} \\
              \end{bmatrix}$. 
The $(1,2)$ block in $M$ is a skew matrix 
$e^{2\alpha}\Gamma$. So if we can additionally take the $(1,1)$ block to be a diagonal matrix, 
then the interaction Hamiltonian  
between the modes is passive and can be simply realized by beam splitters.
According to this guideline, we now seek $\Gamma$ such that $-e^{2\alpha}(\Gamma X+X\Gamma^\top)$ is a diagonal matrix. By direct calculation, we obtain
\begin{align*}
       \Gamma
            =\begin{bmatrix}
                 0 & \gamma_{1} & 0 & \gamma_{2} \\
                 -\gamma_{1} & 0 & \gamma_{1} & \sqrt{2}\gamma_{2}\\
                 0 & -\gamma_{1} & 0 & \gamma_{2} \\
                 -\gamma_{2} & -\sqrt{2}\gamma_{2} & -\gamma_{2} & 0 \\
              \end{bmatrix},
\end{align*}              
where $\gamma_{1}\in \mathbb{R}$ and $\gamma_{2}\in \mathbb{R}$. Substituting the  matrices $P$, $R$ and $\Gamma$ above into the rank condition~\eqref{rankconstraint}, we obtain that 
if $\gamma_{1}\gamma_{2}\neq 0$, the resulting linear quantum system is asymptotically stable 
and achieves the covariance matrix corresponding to~\eqref{first cluster example}. 

The Hamiltonian of this linear quantum system is now determined as 
\begin{align*}
     \hat{H}&= -\gamma_{1}e^{2\alpha}\hat{q}_1^2
            + \gamma_{1}e^{2\alpha}\hat{q}_3^2
             + e^{2\alpha}\gamma_{1}
                    (\hat{H}_{12}^{\rm{(BS)}} + \hat{H}_{23}^{\rm{(BS)}})\\
             &\quad + e^{2\alpha}\gamma_{2}
                    (\hat{H}_{14}^{\rm{(BS)}} + \sqrt{2}\hat{H}_{24}^{\rm{(BS)}}
                       + \hat{H}_{34}^{\rm{(BS)}}),
\end{align*}
where $
     \hat{H}_{jk}^{\rm{(BS)}} = (\hat{q}_j\hat{p}_k - \hat{p}_j\hat{q}_k)
       = i(\hat{a}_j\hat{a}_k^\ast - \hat{a}_j^\ast\hat{a}_k)$, where $\hat{a}_j=(\hat{q}_j+i\hat{p}_j)/\sqrt{2}$ and $\hat{a}_j^\ast=(\hat{q}_j-i\hat{p}_j)/\sqrt{2}$, 
is the Hamiltonian representing the coupling between the $j$th and $k$th 
optical modes at a beam splitter. 
Also the coupling vector   is given by 
\begin{align*}
    \hat{L}= -(\sqrt{2} +e^{-2\alpha}i) \hat{q}_4 + \hat{p}_4,
\end{align*}
which acts only on the fourth mode and hence it is local. 
Finally, using the result in~\cite{NJD09:siamjco}, a corresponding optical realization of this linear quantum system is shown in Fig.~\ref{canonicalstate1}. 
Note that three pumped  crystals are used; we conjecture 
that this is the minimum number required for constructing a desired locally dissipative 
system. 
\begin{figure}[htbp]
\begin{center}
\includegraphics[height=5.292cm]{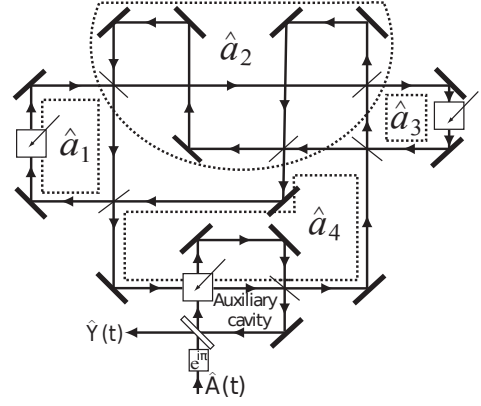}
\caption{
The optical dissipative system that uniquely generates 
the canonical Gaussian cluster state \eqref{first cluster example}. 
The coupling vector  $\hat{L}$ acts only on the fourth mode, and hence it is local. 
}
\label{canonicalstate1}
\end{center}
\end{figure}
\end{exmp}

\begin{exmp}
We next consider a canonical Gaussian cluster state specified by the following 
matrices $X$ and $Y$: 
\begin{align}
\label{second cluster example}
X=\begin{bmatrix}
0 &1 &0 &0\\
1 &0 &1  &0\\
0 &1 &0 &1\\
0 &0 &1 &0
\end{bmatrix}, \quad Y=e^{-2\alpha}I_{4}. 
\end{align} 
The strength of Theorem~\ref{thm2} is  that it readily tells us that 
this canonical Gaussian cluster state cannot be generated in {\it any} four-mode  locally dissipative 
system. 
Nonetheless let us take the same matrix $P$  as before, i.e., $P=[0\; 0\; 0\; 1]^\top$, 
and follow the same guideline as discussed in Example~\ref{exam2}.  
That is, we set $R=0_{4\times 4}$ and seek $\Gamma$ such that $\Gamma X+ X\Gamma^\top$ 
is a diagonal matrix. 
Then, again by direct calculation, 
we find   
\begin{align*}
       \Gamma
            &=                 \begin{bmatrix}
                 0 & \gamma_{1} & 0 & \gamma_{2} \\
                 -\gamma_{1} & 0 & \gamma_{1}+\gamma_{2} & 0 \\
                 0 & -\left(\gamma_{1}+\gamma_{2}\right) & 0 & \gamma_{1} \\
                - \gamma_{2} & 0 & -\gamma_{1} & 0 \\
              \end{bmatrix}.
\end{align*}

Let us take  $\gamma_{1}=1$  and $\gamma_{2}=0$. Then the corresponding system Hamiltonian is given by 
\begin{equation}
\label{Example 2; Hamiltonian}
     \hat{H} 
         = -e^{2\alpha}(\hat{q}_1^2 - \hat{q}_4^2)
             + e^{2\alpha}(\hat{H}_{12}^{\rm{(BS)}} + \hat{H}_{23}^{\rm{(BS)}} 
                 + \hat{H}_{34}^{\rm{(BS)}}).
\end{equation}
It can be verified that the rank condition \eqref{rankconstraint} is satisfied, 
hence the system constructed here is asymptotically stable 
and achieves the desired covariance matrix corresponding to~\eqref{second cluster example}, though in this case 
the system needs to have the following non-local interaction with its environment: 
\begin{align*}
    \hat{L}&= \begin{bmatrix}
          0 & 0 & -1 & -e^{-2\alpha}i & \vline  &0   &0   &0   &1
    \end{bmatrix}\hat{x}= -\hat{q}_3 -i e^{-2\alpha} \hat{q}_4 + \hat{p}_4. 
\end{align*}
An optical realization, which yet contains an abstract component corresponding 
to this non-local interaction, is depicted in Fig.~\ref{canonicalstate2}. 
\begin{figure}[htbp]
\begin{center}
\includegraphics[width=6.6cm]{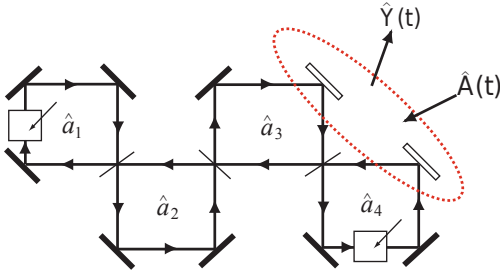}
\caption{
The optical linear quantum system that uniquely 
generates the canonical Gaussian cluster state \eqref{second cluster example}. 
The coupling vector  $\hat{L}$ acts on the third and fourth modes, and hence 
it is not local. }
\label{canonicalstate2}
\end{center}
\end{figure}
A practical implementation of  the non-local interaction depicted in 
Fig.~\ref{canonicalstate2} could be experimentally difficult. 
Nonetheless this issue can be resolved by taking the following method: 
add an auxiliary system with a single mode $\hat x_A=\left[\hat{q}_{A}\;\hat{p}_{A}\right]^{\top},$ 
and specify the target canonical Gaussian cluster state as 
\begin{align}
\label{second cluster example enlarged}
\tilde{X}=\begin{bmatrix}
0 &1 &0 &0 &0 \\
1 &0 &1  &0 &0\\
0 &1 &0 &1 &0\\
0 &0 &1 &0 &0\\
0 &0 &0 & 0 & \lambda \\
\end{bmatrix}, \quad \tilde{Y}=e^{-2\alpha}I_{5}. 
\end{align} 
Since $\tilde{X}_{5j} =0$ and $\tilde{Y}_{5j} = 0$ for all $ j\neq 5$, 
by Theorem~\ref{thm2}, we can construct a five-mode locally dissipative system that uniquely
generates the above canonical Gaussian cluster state~\eqref{second cluster example enlarged}.  
By choosing $P=[0\;0\;0\;0\;1]^\top$ and then taking a similar procedure 
as in the case of Example~\ref{exam2}, we can obtain such a desired locally dissipative quantum system. 
Now we obtain an important observation:  
for an $N$-mode canonical Gaussian cluster state with the graph matrix 
$X=B$ and the squeezing matrix $Y=e^{-2\alpha}I_N$, it is {\it always} 
possible to generate this state in a locally dissipative quantum system by  adding 
a single-mode auxiliary system and specifying the target state as 
$\tilde{X}={\rm diag}[B, \lambda]$ and $\tilde{Y}=e^{-2\alpha}I_{N+1}$. 

\begin{rem}
The method in~\cite{IY13:pra} is based on essentially the same idea; given an $N$-mode pure Gaussian state with graph $Z=X+iY$, instead of generating it directly, 
we enlarge the system by adding a single-mode auxiliary system 
and then specify the target state as $\tilde{X}={\rm diag}[X, \lambda]$ 
and $\tilde{Y}={\rm diag}[Y, 1]$.  
By Theorem~\ref{thm2}, this $(N+1)$-mode target state can be uniquely generated 
in an $(N+1)$-mode locally dissipative system. 
The original $N$-mode pure Gaussian state is then obtained as a reduced state of the target state. 
\end{rem}
\end{exmp}


\section{Conclusion} \label{Conclusion}
In this paper, we have provided two feasible realizations of linear quantum systems for covariance assignment corresponding to
 pure Gaussian states: 
a cascade realization and a locally dissipative realization. 
First, we have shown that given any covariance matrix corresponding to a pure Gaussian state, we can construct a cascaded quantum system that achieves the assigned covariance matrix. This cascaded quantum system is constructed as a cascade connection of several one-dimensional harmonic oscillators,
without any direct interaction Hamiltonians between these oscillators.
Second, we have given a complete characterization of the class of pure Gaussian 
 states that can be generated using  locally dissipative quantum systems. 
In particular, we have shown a specific recipe for constructing a system 
having a relatively simple Hamiltonian coupling between the system modes. 
The results developed in this paper are potentially useful for
the preparation of pure Gaussian states. In the examples, we have provided  realizations of $(\hat{H},\;\hat{L})$ in quantum optics using the result in~\cite{NJD09:siamjco}. The circuit figures shown in the examples are not necessarily the simplest realizations in quantum optics.  Also, a system with $(\hat{H},\;\hat{L})$ could be realized by  other instances of linear quantum systems such as atomic ensembles and optomechanical systems~\cite{MPC11:pra,WC14:pra}.


\section*{Appendix}

Here we briefly review the synthesis theory of linear quantum systems in quantum optics  developed in~\cite{NJD09:siamjco}. 
\subsection*{1. Realization of a quadratic Hamiltonian }
Suppose a quadratic Hamiltonian is given by $\hat{H}_{d}=\frac{1}{2}\hat{\xi}^{\top} M_{d} \hat{\xi}$, where $\hat{\xi}=[
\hat{q}\;\;\hat{p}
]^{\top}$ and $M_{d}=M_{d}^{\top}\in \mathbb{R}^{2\times 2}$.  This Hamiltonian   can be realized by placing a crystal with a classical pump inside an optical cavity, as shown in Fig.~\ref{appendix1}. Working in the frame rotating at half the pump frequency, the Hamiltonian is written as 
\begin{align}
\hat{H}_{r}&=\triangle\hat{a}^{\ast} \hat{a} +\frac{i}{2} \left(\epsilon(\hat{a}^{\ast})^{ 2}- \epsilon^{\ast}\hat{a}^{2}  \right) \notag \\
&= \frac{1}{2}\hat{\xi}^{\top}\begin{bmatrix}
\triangle-\im(\epsilon) &\re(\epsilon)\\
\re(\epsilon) &\triangle+\im(\epsilon)
\end{bmatrix} \hat{\xi} -  \frac{\triangle}{2}, \label{Hrealization}
\end{align}
where $\triangle=\omega_{\text{cav}}-\omega_{p}/2$ is the detuning between the cavity mode frequency  and the half pump frequency.  $\epsilon$ is a measure of the effective pump intensity~\cite{GZ00:book}. From~\eqref{Hrealization}, we see  that by choosing the values of $\triangle$ and $\epsilon$, one can make $\hat{H}_{r}=\hat{H}_{d}-  \frac{\triangle}{2}$. Note that the constant term $-  \frac{\triangle}{2}$ does not affect the dynamics of a linear quantum system, and hence can be ignored. Therefore,  the desired Hamiltonian $\hat{H}_{d}$ can be  realized in this scheme. 
\begin{figure}[htbp]
\begin{center}
\includegraphics[width=5cm]{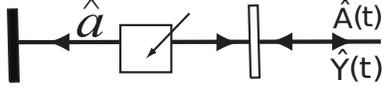}
\caption{A quadratic Hamiltonian   can be realized by placing a crystal with a classical pump inside an optical cavity. }
\label{appendix1}
\end{center}
\end{figure}

\subsection*{2. Realization of a beam-splitter-like interaction Hamiltonian}
Suppose a Hamiltonian is given by $\hat{H}_{d}=h_{d}\hat{a}_{1}^{\ast}\hat{a}_{2}+h_{d}^{\ast}\hat{a}_{2}^{\ast}\hat{a}_{1}$, where $h_{d}\in \mathbb{C}$. This Hamiltonian can be realized by implementing a beam splitter for the two incoming modes $\hat{a}_{1}$ and $\hat{a}_{2}$, as shown in Fig.~\ref{appendix2}. At the beam splitter, we have the following transformations
\begin{align*}
\begin{bmatrix}
\hat{a}_{3}\\
\hat{a}_{4}
\end{bmatrix}=\begin{bmatrix}
t_{2} &r_{1}\\
r_{2} &t_{1}
\end{bmatrix}\begin{bmatrix}
\hat{a}_{1}\\
\hat{a}_{2}
\end{bmatrix}, 
\end{align*}
where $\hat{a}_{3}$ and $\hat{a}_{4}$ denote the outgoing modes, and $r_{1},\; t_{1} \in \mathbb{C}$ denote the (complex) reflectance and transmittance of the beam splitter,  respectively. Note that  $r_{1}$, $t_{1}$, $r_{2}$ and $t_{2}$   satisfy the following relations:
$|r_{2}|=|r_{1}|$, $|t_{2}|=|t_{1}|$,  $|r_{1}|^{2}+|t_{1}|^{2}=1$, $r_{1}^{\ast}t_{2}+r_{2}t_{1}^{\ast}=0$, and  $r_{1}^{\ast}t_{1}+r_{2}t_{2}^{\ast}=0$~\cite{GK04:book}.
Let us parametrize them as $
r_{1}=e^{-i\phi} sin \theta$, $r_{2}=- e^{i\phi} sin \theta$, and $t_{1}=t_{2}=cos\theta$.
Then the interaction Hamiltonian $\hat{H}^{\rm{(BS)}}$  for
this beam splitter is given by
\begin{align}
\hat{H}^{\rm{(BS)}} = i\theta e^{-i\phi} \hat{a}_{1}^{\ast} \hat{a}_{2} -i \theta e^{i\phi} \hat{a}_{2}^{\ast}\hat{a}_{1}. \label{Hrealization2} 
\end{align}
From~\eqref{Hrealization2}, we see  that by choosing the values of $\theta $ and $\phi$, one can make $\hat{H}^{\rm{(BS)}}=\hat{H}_{d}$. That is, the desired beam-splitter-like interaction Hamiltonian $\hat{H}_{d}$ can be realized in this scheme. 
\begin{figure}[htbp]
\begin{center}
\includegraphics[width=4cm]{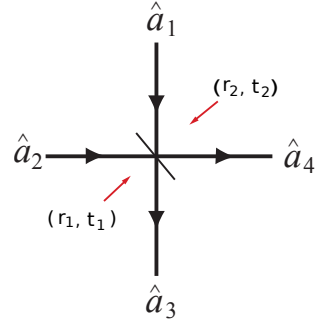}
\caption{A beam-splitter-like interaction Hamiltonian   can be realized by placing a beam splitter for the two incoming modes $\hat{a}_{1}$ and $\hat{a}_{2}$. }
\label{appendix2}
\end{center}
\end{figure}

\subsection*{3. Realization of a dissipative coupling $\hat{L}$}
To realize a coupling operator $\hat{L}_{d}=c_{1}\hat{q}+c_{2}\hat{p}=\left(\frac{c_{1}-ic_{2}}{\sqrt{2}}\right)\hat{a}+\left(\frac{c_{1}+ic_{2}}{\sqrt{2}}\right)\hat{a}^{\ast}$, we consider the configuration shown in Fig.~\ref{appendix3}. The configuration consists of a ring cavity with mode $\hat{a}$ and an auxiliary ring cavity with mode $\hat{b}$.  The
cavity modes $\hat{a}$ and $\hat{b}$ interact through a crystal pumped by  a classical  beam,
and a beam splitter. The frequency of the auxiliary cavity mode $\hat{b}$ is matched to half the  pump frequency. Working in the frame rotating at half the pump frequency, the interaction Hamiltonian is written as
\begin{align*}
\hat{H}_{ab}=\frac{i}{2}\left(\epsilon_{1} \hat{a}^{\ast}\hat{b}^{\ast}- \epsilon_{1}^{\ast} \hat{a}\hat{b} \right)+\frac{i}{2}\left(\epsilon_{2} \hat{a}^{\ast}\hat{b}- \epsilon_{2}^{\ast} \hat{a} \hat{b}^{\ast}\right),
\end{align*}
where $\epsilon_{1}$ determines the effective pump intensity and $\epsilon_{2}$ determines the parameters of the beam splitter. Assume that the coupling coefficient $\gamma$ of the partially transmitting mirror is large so that  the mode $\hat{b}$ is heavily damped and  can be adiabatically eliminated. Then after elimination of $\hat{b}$, the resulting coupling operator is given by 
\begin{align}
\hat{L}_{r}=\frac{1}{\sqrt{\gamma}}(-\epsilon_{2}^{\ast}\hat{a}+\epsilon_{1}\hat{a}^{\ast}). \label{Lrealization} 
\end{align}
From~\eqref{Lrealization}, we see  that by choosing the values of $\epsilon_{1}$, $\epsilon_{2}$, and $\gamma $ with $\gamma $ being large,  we can make $\hat{L}_{r}=\hat{L}_{d}$. That is,  the desired coupling operator $\hat{L}_{d}$ can be realized in this scheme. See~\cite{NJD09:siamjco} for details.   
\begin{figure}[htbp]
\begin{center}
\includegraphics[width=5cm]{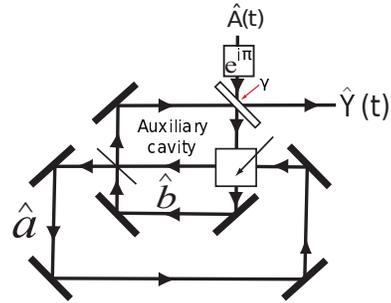}
\caption{Realization of a dissipative coupling operator $\hat{L}$.  }
\label{appendix3}
\end{center}
\end{figure}


\end{document}